\def\lc{\left\lceil}
\def\rc{\right\rceil}
\newtheorem{theorem}{Theorem}
\title{Fast Template Matching by Subsampled Circulant Matrix}
\name{}
\name{Sung-Hsien Hsieh$^{*,**}$, Chun-Shien Lu$^{*}$, and Soo-Chang Pei$^{**}$}
\address{$^{*}$Institute of Information Science, Academia Sinica, Taipei, Taiwan\\
$^{**}$Graduate Inst. Comm. Eng., National Taiwan University, Taipei, Taiwan}
\begin{document}

\maketitle

\begin{abstract}
Template matching is widely used for many applications in image and signal processing and usually is time-critical. Traditional methods usually focus on how to reduce the search locations by coarse-to-fine strategy or full search combined with pruning strategy. However, the computation cost of those methods is easily dominated by the size of signal $N$ instead of that of template $K$. This paper proposes a probabilistic and fast matching scheme, which computation costs requires $O(N)$ additions and $O(K \log K)$ multiplications, based on cross-correlation. The nuclear idea is to first downsample signal, which size becomes $O(K)$, and then subsequent operations only involves downsampled signals. The probability of successful match depends on cross-correlation between signal and the template. We show the sufficient condition for successful match and prove that the probability is high for binary signals with $\frac{K^{2}}{\log K} \geq O(N)$. The experiments shows this proposed scheme is fast and efficient and supports the theoretical results.
\end{abstract}

\begin{keywords}
Circulant matrix, Cross-correlation, Subsampling, Template matching
\end{keywords}

\section{Introduction}\label{sec:intro}
Template matching is a task of searching a given template in a given signal.
It has been widely used in many applications, including communication synchronization \cite{Hassanieh2012acm}, quality control \cite{Aksoy2004}, compression \cite{Luczak1997}, object detection \cite{Dufour2002}, and etc.
Given a template $\bm{t} \in \mathbb{R}^{K}$ and a signal $\bm{x} \in \mathbb{R}^{N}$, template matching is to solve
\begin{equation}
\label{eq: TM}
\displaystyle \arg\max_{0<k < N}\ Sim\left(\bm{x}^{k},t\right),
\end{equation}
where $\bm{x}^{k}=\left[ (\bm{x})_{k},...,(\bm{x})_{N-1},(\bm{x})_{0},...,(\bm{x})_{k-1} \right]$, $(\bm{x})_{k}$ represents the $k^{th}$ entry of $\bm{x}$, and $Sim\left( \cdot \right)$ denotes a kind of similarity metric.

One main challenge of solving Eq. (\ref{eq: TM}) is to incur high computation time but many applications demands real-time and are energy-critical.
There are two factors influencing computation overheads.
One is what similarity measures, including cross-correlation (CC) \cite{Anuta1970}\cite{Lewis1995}, normalized cross-correlation (NCC) \cite{wei2008}\cite{Pan2008}, and sum of squared differences (SSD)\cite{Santini1999}, are used.
Similarity measure influences performances as well.
Another is the search strategy.
For example, exhaustive search (full search) that is conducted by checking $\bm{x}^{k}$ from $k=0$ to $k=N-1$, which is unacceptably slow.

Many techniques have been developed to overcome these difficulties as follows.
\begin{enumerate}
\item[$\bullet$] Coarse-to-fine strategy \cite{VanderBrug1977}\cite{Gharavi2001}\cite{Mahmood2012}: First, a coarse search is conducted by finding the downsampled template in the downsampled image to yield a good match with less computation overhead.
Then, a fine search is conducted in the original space starting from the neighborhood of the best match found in coarse search.
\item[$\bullet$] Full search-equivalent strategy \cite{Mattoccia2008}\cite{Schweitzer2011}\cite{Quyang2012}: It employ rejection schemes derived, where current search is terminated as soon as some criterions are satisfied.
\item[$\bullet$] Fast convolution by Fast Fourier Transform (FFT) \cite{Hassanieh2012acm}\cite{Lewis1995}: For specific similarity measures such CC and SSD, Eq. (\ref{eq: TM}) requires convoluting $\bm{x}$ with $\bm{t}$.
By convolution theorem, this operation can be quickly done in the frequency domains of $\bm{x}$ and $\bm{t}$.
In particular, \cite{Hassanieh2012acm} proposing using sparse FFT \cite{Haitham2012} to replace FFT for furhter reducing the computation cost.
\end{enumerate}

In this paper, we propose a fast template matching method using downsampled circulant matrix and cross-correlation.
For image-based applications, the performance using cross-correlation is usually inferior to that using NCC or SSD.
But, for binary signals such as CDMA codes in communications, it indeed works well \cite{Hassanieh2012acm}.
Our idea is to search the match for downsampled signal, and no search in the higher resolution space is required. It results in the fact that the computation overhead, which is far less than other approaches, is related to the size of downsampled signal.
The crux of our method relies on how to design a circulant matrix to achieve fast template matching and can be considered to be an extension of \cite{Hassanieh2012acm}.
The main diferences are two-fold:
(1) Our idea is based on exploiting the commutative property between circulant matrices instead of adopting sFFT, leading to the advantages that the proposed scheme is more simple and faster than \cite{Hassanieh2012acm}.
(2) The computation cost of our method benefits from the size of template, but that of \cite{Hassanieh2012acm} only depends on the size of signal.

\section{Notations}\label{sec:notations}
We briefly introduce the notations used in this paper.
Let $\bm{v} \in \mathbb{R}^{N}$ be a vector and let $circ(\bm{v})$ be a circulant matrix generated based on the seed vector $\bm{v}$.
For example, $\bm{V}=circ(\bm{v})$, where the first row of $\bm{V}$ is $[(\bm{v})_{0},(\bm{v})_{1},...,(\bm{v})_{N-1} ]$, the second row is $[(\bm{v})_{N-1},(\bm{v})_{0},...,(\bm{v})_{N-2} ]$, and the last row is $[(\bm{v})_{1},(\bm{v})_{2},...,(\bm{v})_{0} ]$.
$0_{1 \times (N)}$ represents the $1 \times N$ zero vector.
If $\bm{u} \in \mathbb{R}^{M}$ and $\bm{v} \in \mathbb{R}^{N}$, $[\bm{u}\ \bm{v}] \in \mathbb{R}^{M+N}$ denotes a vector by concatenating $\bm{u}$ and $\bm{v}$.
We simplify notations to use $(\bm{x})_{i}=(\bm{x})_{i\ modulo \ N}$.

\section{Proposed Method}\label{sec:proposed approach}
Template matching based on cross-correlation is equivalent to solving $\displaystyle m = \arg\max_{k}\ (\bm{Tx})_{k} $ with $\displaystyle (\bm{Tx})_{k}= \sum_{i=0}^{K-1} (\bm{t})_{i}(\bm{x})_{k+i}$, where $\bm{T}=circ([\bm{t}\ 0_{1 \times (N-K)}])$.
Specifically, $(\bm{Tx})_{k}$ is considered as the matching result at position $k$ of $\bm{x}$ and $m$ is considered as the ground truth with the maximum correlation, $(\bm{Tx})_{m}$.
The goal of the proposed method is to find an index $\hat{m}$, which is expected to be equal to the ground truth $m$, with fast computation.

The key idea is to downsample an original signal $\bm{x}$ via a sampling matrix $\Phi$ to become a low-dimensional signal $\bm{y}$, {\em i.e.},  $\bm{y}=\Phi \bm{x}$, and
then find a matrix $\hat{\bm{T}} \in \mathbb{R}^{M \times M}$ such that $\hat{\bm{T}}\bm{y} = \hat{\bm{T}}\Phi \bm{x}  = \Phi \bm{T x}$.
Since,  compared with $\bm{x}$, $\bm{y}$ has lower dimension, the computation cost of $\hat{\bm{Ty}}$ is lower than that of $\bm{Tx}$.
In other words, $\hat{\bm{Ty}}$ is equivalent to fast convolving $\bm{x}$ with $\bm{T}$.
Moreover, $\Phi \bm{T x}$ is considered to downsample $\bm{Tx}$, which is a vector representing matching results based on cross-correlation.
Nevertheless, downsampling ({\em i.e.}, for $\Phi \bm{T x}$) also leads to the side effect that the matching result with the maximum cross-correlation cannot be identified intuitively from $\hat{\bm{Ty}}$.
To overcome the difficulty, we first require at least two downsampled signals for conducting template matching and then by Chinese Remainder Theorem (CRT) the ground truth can be correctly identified.

The proposed algorithm is depicted in Algorithm \ref{alg:FTM}, which is mainly composed of four operations:
\begin{enumerate}
\item[1.] Downsampling: Downsample original signal $\bm{x}$ into low-dimensional signal $\bm{y}$, where each entry of $\bm{y}$ is the sum of $\lc \frac{N}{M} \rc$ entries in $\bm{x}$ (Step 2-3).
\item[2.] Fast convolution in low-dimensional space: Convolve $\bm{y}$ with $\bm{t}$ and obtain the result $\bm{r}$ (Step 4-6).
By commutative property between two circulant matrices, $\bm{y}$ being convolved by $\bm{t}$ is equivalent to $\bm{x}$ being convolved by $\bm{t}$. Each entry of the convolution result $\bm{r}$ is transformed as the sum of $\lc \frac{N}{M} \rc$ entries in $\bm{Tx}$.
\item[3.] Matching position finding: If an entry in $\bm{r}$ is large, it implies that one of $\lc \frac{N}{M} \rc$ entries in $\bm{Tx}$ is also large with high probability. Thus, by searching the maximum value in $\bm{r}$ (Step 7), it provides the information, a set of $\lc \frac{N}{M} \rc$ candidate positions, including $m$.
\item[4.] Best matching position choice: Based on the Chinese Remainder Theorem, we can further identify the best template matching result as the unique position $\hat{m}$ (Step 8-9).
\end{enumerate}

\begin{algorithm}[!t]
\small
\centering
\setlength{\abovecaptionskip}{0pt}
\setlength{\belowcaptionskip}{0pt}
\caption{Fast Template Matching}
\label{alg:FTM}
\begin{tabular}[t]{p{8.4cm}l}
\hline
\textbf{Input:} $\bm{x} \in \mathbb{R}^{N}$, $\bm{t} \in \mathbb{R}^{K}$;\quad \textbf{Output:} $\hat{m}$;\\
\hline
01. \textbf{function} \textbf{FTM}()\\
02. \quad Pick two co-prime integers $M_{1}$ and $M_{2}$ as downsampling \\ \quad\quad\quad factors such that $M_{1}M_{2}>N$, $M_{1}\geq K$, and $M_{2}\geq K$;\\
03. \quad Downsample $\bm{x}$ into $\bm{y}_{M_{1}}$ and $\bm{y}_{M_{2}}$, where\\
\quad \ \ \quad $\displaystyle (\bm{y}_{M_{1}})_{k} = \sum_{i=0}^{\lc \frac{N}{M_{1}} \rc -1 } (\bm{x})_{k+iM_{1}} $ for $k=0,...,M_{1}+K-1$, \\
\quad \ \ \quad $\displaystyle (\bm{y}_{M_{2}})_{k} = \sum_{i=0}^{\lc \frac{N}{M_{2}} \rc -1 } (\bm{x})_{k+iM_{2}} $  for $k=0,...,M_{2}+K-1$ ; \\
04. \quad Assign $\bm{t}_{M_{1}}=[\bm{t}\ 0_{1 \times M_{1}}]$ and $\bm{t}_{M_{2}}=[\bm{t}\ 0_{1 \times M_{2}}]$;\\
05. \quad $\tilde{\bm{y}}_{M_{1}}=FFT(\bm{y}_{M_{1}})$, and   $\tilde{\bm{y}}_{M_{2}}=FFT(\bm{y}_{M_{2}})$;\\
\quad \ \ \quad $\tilde{\bm{t}}_{M_{1}}=FFT(t_{M_{1}})$, and   $\tilde{\bm{t}}_{M_{2}}=FFT(\bm{t}_{M_{2}})$;\\
06. \quad $\bm{r}_{M_{1}}=IFFT(\tilde{\bm{t}}_{M_{1}}.* \tilde{\bm{y}}_{M_{1}} )$, $\bm{r}_{M_{2}}=IFFT(\tilde{\bm{t}}_{M_{2}}.* \tilde{\bm{y}}_{M_{2}} )$; \\
\quad\quad\quad where ``.*'' denotes pixel-wise multiplication.\\
07. \quad  $\displaystyle \tilde{m}_{1} = \arg\max_{i \leq M_{1}} (\bm{r}_{M_{1}})_{i} $, $\displaystyle \tilde{m}_{2} = \arg\max_{i \leq M_{2} } (\bm{r}_{M_{2}})_{i} $;\\
08. \quad  $\bm{S}_{M_{1}}=\left\{ k | k = \tilde{m}_{1}+i*M_{1} \text{ for }  i=0,...,\lc \frac{N}{M_{1}} \rc -1   \right\}$,\\
\quad \ \ \quad $\bm{S}_{M_{2}}=\left\{ k | k = \tilde{m}_{2}+i*M_{2} \text{ for }  i=0,...,\lc \frac{N}{M_{2}} \rc -1   \right\}$;\\
09. \quad $\hat{m}= \bm{S}_{M_{1}} \bigcap \bm{S}_{M_{2}}$;\\
10. \textbf{end} \textbf{function}\\
\hline
\end{tabular}
\end{algorithm}

The main operations are discussed in detail as follows.
We will use $M$ to denote the downsampling factor if there is no confusion.

\subsection{Downsampling}\label{ssec: step 1}
Let $\bm{y}$ be a downsampled signal defined as:
\begin{equation}
\label{eq: downsample}
 (\bm{y})_{k}=\sum_{i=0}^{\lc \frac{N}{M} \rc -1 } (\bm{x})_{k+iM}= (\Phi \bm{x})_{k}
\end{equation}
with
\begin{equation}
\label{eq: Phi}
\Phi=\mathcal{D}_{M}(circ(r)),
\end{equation}
where $\mathcal{D}_{M}(\cdot)$ is the function that outputs the first $M$ rows of its argument and $\bm{r}$ is generated as follows: $(\bm{r})_{k}=1$ if $ k \text{ mod } M =0 $ for $0\leq k\leq N-1$.

There are two useful properties about $\Phi$: 1) $\Phi \bm{x}$ only involves additions and 2) $\Phi$ is a circulant matrix.
It should be noted that $M$ controls the dimension of a downsampled signal.
Furthermore, the proposed method requires $M\geq K$; otherwise, there will be no matrix satisfying $\hat{\bm{T}}\bm{y}=\Phi \bm{Tx}$ (details will be discussed later in Theorem \ref{thm: circulant matrix exchange} and Theorem \ref{thm: circulant matrix exchange 2nd Ver}).

\subsection{Fat convolution in low-dimensional space}\label{ssec: step 2}
After downsampling, we find a matrix $\hat{\bm{T}} \in \mathbb{R}^{M \times M}$ such that  $\hat{\bm{T}}\Phi \bm{x}  = \Phi \bm{T x}$.
If this strategy is feasible, it reduces the computation cost since the dimension of $\hat{\bm{T}}$ is smaller than that of $\bm{T}$. However, the problem is not intuitive but difficult in that it is not simply related to the commutative property in matrices.
Valsesia and Magli \cite{Valsesia2012} propose Theorem \ref{thm: circulant matrix exchange}, stating under what sufficient conditions the commutative property holds.

\begin{theorem} \label{thm: circulant matrix exchange} (slightly revised from \cite{Valsesia2012})
Let $\bm{T}=circ([\bm{t}\ 0_{1 \times (N-K)}]) \in \mathbb{R}^{N \times N}$ be a circulant matrix, where $\bm{t}$ is a $1 \times K$ ($K \leq M$) non-zero vector.
Let $\hat{\bm{T}} \in \mathbb{R}^{M \times M} = circ([\bm{t}\ 0_{1 \times (M-K)}])$, let $\Phi$ be a $M \times N$ partial circulant matrix, and let $\bm{y}=\Phi \bm{x}$ and $\Phi \bm{T x}$ denote the measurements of $\bm{x}$ and measurements of filtered signal ($\bm{Tx}$), respectively.
Then,
$$ \left( \hat{\bm{T}}\bm{y} \right)_{i} = \left( \Phi \bm{T x} \right)_{i}\ \ \text{if and only if} \ \ i \in [0,\ M-K ]. $$
\end{theorem}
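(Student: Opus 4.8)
The plan is to prove the row-equality by expanding both $\hat{\bm{T}}\bm{y}$ and $\Phi\bm{Tx}$ into explicit double sums over the entries of $\bm{x}$ and comparing them term by term. First I would record the two elementary facts that drive everything: the downsampling acts as $(\Phi\bm{u})_i=\sum_{j=0}^{\lceil N/M\rceil-1}(\bm{u})_{(i+jM)\bmod N}$, while the seed truncation gives $(\bm{Tx})_\ell=\sum_{s=0}^{K-1}(\bm{t})_s(\bm{x})_{(\ell+s)\bmod N}$ and $(\hat{\bm{T}}\bm{v})_i=\sum_{s=0}^{K-1}(\bm{t})_s(\bm{v})_{(i+s)\bmod M}$, the only difference between $\bm{T}$ and $\hat{\bm{T}}$ being that the latter reduces the shift modulo $M$ rather than modulo $N$.

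Substituting $\bm{y}=\Phi\bm{x}$ and collapsing the nested strides then yields $(\Phi\bm{Tx})_i=\sum_{s=0}^{K-1}(\bm{t})_s\sum_{j}(\bm{x})_{(i+s+jM)\bmod N}$ and $(\hat{\bm{T}}\bm{y})_i=\sum_{s=0}^{K-1}(\bm{t})_s\sum_{j}(\bm{x})_{(((i+s)\bmod M)+jM)\bmod N}$, so the two expressions agree summand-by-summand exactly when $(i+s)\bmod M=i+s$ for every $s\in[0,K-1]$, i.e.\ when no shift overflows the block length $M$. This immediately settles the ``if'' direction: when $i\le M-K$ we have $i+s\le M-1$ for all $s\le K-1$, no reduction modulo $M$ occurs, and the two double sums are identical. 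Equivalently, and more transparently, since $circ(\bm{r})$ and $\bm{T}$ are both $N\times N$ circulant they commute, so writing $\Phi=\mathcal{D}_M(circ(\bm{r}))=P\,circ(\bm{r})$ with $P=[\,I_M\ 0\,]$ gives $\Phi\bm{T}=P\bm{T}\,circ(\bm{r})$ and $\hat{\bm{T}}\Phi=\hat{\bm{T}}P\,circ(\bm{r})$, and the first $M-K+1$ rows of $P\bm{T}$ and $\hat{\bm{T}}P$ coincide because there the template tail never reaches column $M$.

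For the converse I would isolate, for $i>M-K$, the smallest overflowing index $s^{\ast}=M-i$ (which lies in $[0,K-1]$ precisely because $i\ge M-K+1$): replacing $(i+s^{\ast})\bmod M=i+s^{\ast}-M$ and reindexing $j\mapsto j+1$ shows its inner sum differs from the matching $\Phi\bm{Tx}$ term by $(\bm{x})_{(i+s^{\ast}-M)\bmod N}-(\bm{x})_{(i+s^{\ast}+(\lceil N/M\rceil-1)M)\bmod N}$. The main obstacle, and the only genuinely delicate point, is to certify that the accumulated residual over all overflowing $s$ does not vanish: the two indices above coincide iff $\lceil N/M\rceil\,M\equiv 0\pmod N$, i.e.\ iff $M\mid N$. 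Hence under the operative assumption that the stride does not evenly divide $N$ (the relevant case, since coprime $M_1,M_2$ with $M_1M_2>N$ cannot both divide $N$) the residual is a nonzero linear functional of $\bm{x}$, so $(\hat{\bm{T}}\bm{y})_i\ne(\Phi\bm{Tx})_i$ for generic $\bm{x}$; equivalently, rows $i>M-K$ of $\hat{\bm{T}}\Phi$ and $\Phi\bm{T}$ differ.

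I expect the bookkeeping of the wrap-around at the top of the index range, together with pinning down exactly why $\lceil N/M\rceil M>N$ breaks the symmetry that would otherwise make the truncated circulant $\hat{\bm{T}}$ agree with $\Phi\bm{T}$ everywhere, to be where the real work lies. Everything else is a matter of carefully tracking modular shifts, and the commuting-circulant reformulation keeps the ``if'' half essentially free.
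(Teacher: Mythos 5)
First, a point of reference: the paper never proves this statement --- it is imported (``slightly revised'') from \cite{Valsesia2012} with no proof given, and the only commutativity argument the paper actually writes out is its proof of Theorem 2, which rests on exactly the fact you invoke, namely that circulant matrices of the same size commute. So your proposal cannot be matched against ``the paper's own proof''; it has to stand on its own. Your ``if'' direction does stand: the second formulation is the right one, since writing $\Phi = P\,circ(\bm{r})$ with $P=[\,I_M\ 0\,]$, using $circ(\bm{r})\bm{T}=\bm{T}\,circ(\bm{r})$, and checking that rows $0,\dots,M-K$ of $P\bm{T}$ and $\hat{\bm{T}}P$ coincide handles an \emph{arbitrary} partial circulant $\Phi$, whereas your expansion $(\Phi\bm{u})_i=\sum_j(\bm{u})_{(i+jM)\bmod N}$ silently specializes $\Phi$ to the paper's strided-sum sampler of Eq.~(3), which is narrower than what the theorem hypothesizes.

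The gap is in the ``only if'' half. (a) As literally stated, that direction is not provable because it is false: it fails for special signals ($\bm{x}=\bm{0}$), for templates with trailing zeros (take $\bm{t}=[1\ 0]$, so that $\bm{T}=\bm{I}_N$ and $\hat{\bm{T}}=\bm{I}_M$, and equality holds in \emph{every} row), and --- as you yourself observe --- when $M\mid N$ with a periodic seed, which is precisely the loophole the paper's Theorem 2 exploits. Reading the claim generically, as you do, is the right repair; but then (b) your nonvanishing assertion is stated, not proved. You compute the residual only for the single smallest overflowing shift $s^{\ast}=M-i$, and you never rule out that the residuals of the different overflowing shifts $s\in\{M-i,\dots,K-1\}$ cancel one another as a linear functional of $\bm{x}$: the total residual is $\sum_{s\geq M-i}(\bm{t})_s\bigl[(\bm{x})_{a_s}-(\bm{x})_{(a_s+r)\bmod N}\bigr]$ with $a_s=i+s-M$ and $r=\lceil N/M\rceil M-N$, and the index $(a_{s'}+r)\bmod N$ can collide with $a_s$ for $s\neq s'$; moreover $(\bm{t})_{s^{\ast}}$ itself may be zero. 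A complete argument would take the smallest overflowing $s$ with $(\bm{t})_s\neq 0$ and verify (using, say, $N\geq 2M$ to prevent wrap-around collisions) that the coefficient of $(\bm{x})_{a_s}$ in this functional is exactly $(\bm{t})_s\neq 0$, so the functional is nonzero whenever some overflowing template entry is nonzero --- and the converse genuinely fails otherwise. That accounting is exactly the ``real work'' you predicted would be the delicate point, but it is not carried out in the proposal.
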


In fact, Theorem \ref{thm: circulant matrix exchange} also implies that there are $K$ mismatches, namely $(\hat{\bm{T}} \bm{y})_{i} \neq (\hat{\bm{T}} \Phi \bm{x})_{i}$ for $M-K < i\leq M-1$.
These mismatches may lead to the failure of the proposed algorithm.
For example, $(\hat{\bm{T}} \bm{y})_{k}$, in fact, is the sum of $\lc \frac{N}{M} \rc$ entries of $\bm{Tx}$.
If $k$ is the best match position involving $(\bm{Tx})_{m}$, and $(\hat{\bm{T}} \bm{y})_{k} = (\Phi \bm{T  x})_{k}$ (without mismatches),  it means $\displaystyle (\hat{\bm{T}} \bm{y})_{k}= (\bm{Tx})_{m}+\bm{\eta}$, where $\displaystyle \bm{\eta} = \sum_{i = 0,i \neq \hat{i}}^{ \lc \frac{N}{M} \rc -1} (\bm{Tx})_{k+iM}$ and $m=k+\hat{i}M$.
We can anticipate that $(\hat{\bm{T}} \bm{y})_{k}$ may be large enough since it involves $(\bm{Tx})_{m}$.
On the contrary, if $(\hat{\bm{T}} \bm{y})_{k} \neq (\Phi \bm{T  x})_{k}$ (with mismatches), $(\hat{\bm{T}} \bm{y})_{k}$ is unpredictable since  $\displaystyle (\hat{\bm{T}} \bm{y})_{k}=(\bm{Tx})_{m}+\bm{\eta}$ no longer holds.
Under the circumstance, the proposed algorithm loses the information about $(\bm{Tx})_{m}$.

To deal with this problem, we consider two strategies: 1) Instead of using $\mathcal{D}_{M}$, we use $\mathcal{D}_{M+K}$ in Eq. (\ref{eq: Phi}) to produce $\Phi$.
In other words, according to Theorem \ref{thm: circulant matrix exchange}, $(\hat{\bm{T}} \bm{y})_{k} = (\Phi \bm{T x})_{k} $ holds for $0\leq k \leq M-1$.
It should be noted that the set of candidate positions collected from $(\hat{\bm{T}} \bm{y})_{k}$ for $0\leq k\leq M-1$ already includes the information of $(\bm{Tx})_{0},...,(\bm{Tx})_{N-1}$.
Thus, no information is lost.
However, it results in more overheads since the size of $\bm{y}$ is increased to $M+K$.
2) To further reduce such overheads, we derive another sufficient condition without mismatch by designing a new sampling matrix $\Phi$.

\begin{theorem} \label{thm: circulant matrix exchange 2nd Ver}
Suppose $\frac{N}{M}$ is an integer.
Let $\bm{T}=circ([\bm{t}\ 0_{1 \times (N-M)}]) \in \mathbb{R}^{N \times N}$ be a circulant matrix, where $\bm{t}$ is a $1 \times M$ non-zero vector, and let $\hat{\bm{T}} \in \mathbb{R}^{M \times M} = circ(\bm{t})$.
We also let $\Phi = [\underbrace{\Phi_{M}\ \Phi_{M}...\Phi_{M}}_{\frac{N}{M}}]$ be an $M \times N$ matrix, where $\Phi_{M}$ is an $M \times M$ circulant matrix.
Then, we have the measurements $\bm{y}=\Phi \bm{x}$ and measurements $\Phi \bm{T x}$ of filtered signal $\bm{Tx}$.
Therefore, we have
$$ \left( \hat{\bm{T}}\bm{y} \right)_{i} = \left( \Phi \bm{T x} \right)_{i}\ \ \text{if and only if} \ \ i \in [0,\ M-1]. $$
\end{theorem}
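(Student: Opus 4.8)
The plan is to recognize that the stated equivalence is really a full matrix identity in disguise, and to establish it by peeling off the two commuting circulant factors until only a single ``folding'' intertwining relation remains. First I would note that both $\hat{\bm T}\bm y$ and $\Phi\bm Tx$ are vectors in $\mathbb{R}^{M}$, indexed by $i\in[0,M-1]$, so the ``only if'' direction is vacuous: there are simply no admissible indices outside $[0,M-1]$. Hence the content of the theorem is precisely that $(\hat{\bm T}\bm y)_{i}=(\Phi\bm Tx)_{i}$ for every $i\in[0,M-1]$, and since this must hold for $\bm y=\Phi\bm x$ with arbitrary $\bm x$, it is equivalent to the matrix identity
\[
\hat{\bm T}\,\Phi \;=\; \Phi\,\bm T .
\]
So the entire proof reduces to verifying this single equation; unlike Theorem~\ref{thm: circulant matrix exchange}, I expect it to hold exactly, with no mismatched rows.

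Next I would exploit the block structure of $\Phi$. Writing $L=N/M$ (an integer by hypothesis) and letting $J=[\,I_{M}\ I_{M}\ \cdots\ I_{M}\,]$ be the $M\times N$ matrix formed by $L$ copies of the identity, the assumption $\Phi=[\Phi_{M}\ \cdots\ \Phi_{M}]$ factors cleanly as $\Phi=\Phi_{M}J$. Since $\hat{\bm T}=circ(\bm t)$ and $\Phi_{M}$ are both $M\times M$ circulant matrices, they commute, so
\[
\hat{\bm T}\,\Phi \;=\; \hat{\bm T}\,\Phi_{M}J \;=\; \Phi_{M}\,\hat{\bm T}\,J ,
\qquad
\Phi\,\bm T \;=\; \Phi_{M}\,J\,\bm T .
\]
Cancelling the common left factor $\Phi_{M}$, the goal collapses to the ``folding intertwines with filtering'' identity $\hat{\bm T}J=J\bm T$. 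Intuitively $J$ folds an $N$-vector into $M$ samples by summing its $L$ consecutive blocks, and the claim is that folding the circularly filtered signal $\bm Tx$ equals applying the short filter $\hat{\bm T}$ to the folded signal.

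To prove $\hat{\bm T}J=J\bm T$, I would expand both filters in the cyclic shift operators: with $P_{N}$ and $P_{M}$ the forward cyclic shifts on $\mathbb{R}^{N}$ and $\mathbb{R}^{M}$, the definition of $circ$ gives $\bm T=\sum_{i=0}^{M-1}t_{i}P_{N}^{\,i}$ and $\hat{\bm T}=\sum_{i=0}^{M-1}t_{i}P_{M}^{\,i}$. By linearity it then suffices to establish the single intertwining relation $P_{M}J=J P_{N}$ and extend it to powers by induction, since $P_{M}^{\,i}J=J P_{N}^{\,i}$ immediately yields $\hat{\bm T}J=\sum_{i}t_{i}P_{M}^{\,i}J=\sum_{i}t_{i}J P_{N}^{\,i}=J\bm T$. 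The relation $P_{M}J=J P_{N}$ I would check entrywise on a block-partitioned vector $\bm z=[\bm z^{(0)};\cdots;\bm z^{(L-1)}]$, verifying for each output coordinate $j$ that $JP_{N}$ and $P_{M}J$ produce the same sum of entries of $\bm z$.

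The main obstacle is the boundary coordinate $j=M-1$, which is the only place the two sides could disagree and where the hypothesis $N=LM$ is actually used. For $j<M-1$ the shift stays inside each block and the equality is immediate; for $j=M-1$, $P_{N}$ pushes the last entry of block $l$ into the first entry of block $l+1$, so $JP_{N}$ produces $\sum_{l}(\bm z)_{(l+1)M}=\sum_{l}(\bm z)_{lM}$ after the cyclic wraparound $(\bm z)_{LM}=(\bm z)_{0}$, while $P_{M}J$ wraps $j+1=M$ back to $0$ and produces exactly $\sum_{l}(\bm z)_{lM}$. The two agree precisely because $N$ is an integer multiple of $M$, so the cross-block carry of $P_{N}$ and the modular wrap of $P_{M}$ line up; this is the structural reason the new sampling matrix removes the $K$ mismatches present in Theorem~\ref{thm: circulant matrix exchange}, where the partial circulant $\mathcal{D}_{M}(\cdot)$ lacks this exact periodic block tiling. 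Tracking the wraparound carefully at this single coordinate is the crux; the rest is bookkeeping.
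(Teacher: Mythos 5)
Your proof is correct, and it in fact goes further than the paper's own argument. The paper starts from the same key fact---commutativity of the two $M\times M$ circulants $\hat{\bm T}$ and $\Phi_M$---and writes $\hat{\bm T}\Phi\bm x = [(\hat{\bm T}\Phi_{M})\ \cdots\ (\hat{\bm T}\Phi_{M})]\bm x = [(\Phi_{M}\hat{\bm T})\ \cdots\ (\Phi_{M}\hat{\bm T})]\bm x = \Phi\bar{\bm T}\bm x$, where $\bar{\bm T}$ is the $N\times N$ block-diagonal matrix with copies of $\hat{\bm T}$ on its diagonal, and stops there. But the theorem asserts equality with $\Phi\bm T\bm x$ for the full $N\times N$ circulant $\bm T$, and the paper never shows $\Phi\bar{\bm T}\bm x=\Phi\bm T\bm x$: the matrices $\bar{\bm T}$ (which filters each length-$M$ block separately, with per-block wraparound) and $\bm T$ (which filters across block boundaries) genuinely differ, and reconciling them after left multiplication by the folding is exactly where the cross-block carries must be checked. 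Your decomposition $\Phi=\Phi_{M}J$ together with the folding identity $\hat{\bm T}J=J\bm T$, proved through the shift intertwining $P_{M}J=JP_{N}$ with the boundary coordinate $j=M-1$ handled by the hypothesis $N=LM$, is precisely that missing step; it is also the structural explanation you correctly identify for why this sampling matrix has no mismatches, unlike Theorem~\ref{thm: circulant matrix exchange}. So your route buys a complete, self-contained proof of the claimed identity $\hat{\bm T}\Phi=\Phi\bm T$, whereas the paper's shorter block-wise manipulation, read literally, only establishes $\hat{\bm T}\Phi=\Phi\bar{\bm T}$. One small wording caution: ``cancelling the common left factor $\Phi_{M}$'' is not literally valid since $\Phi_{M}$ need not be invertible, but your logic only uses the sufficient direction---prove $\hat{\bm T}J=J\bm T$, then left-multiply by $\Phi_{M}$---which is sound; it would be worth phrasing it that way explicitly.
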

\begin{proof}
If both $\bm{A} \in \mathbb{R}^{N \times N}$ and $\bm{B} \in \mathbb{R}^{N \times N}$ are circulant matrices, then matrix multiplication is commutative, namely $\bm{AB}=\bm{BA}$. In our case, $\hat{\bm{T}}\bm{y} = \hat{\bm{T}}\Phi \bm{x} = \hat{\bm{T}} [\Phi_{M}\ \Phi_{M}...\Phi_{M}]\bm{x}  = [\left(\hat{\bm{T}}\Phi_{M}\right)\ \left(\hat{\bm{T}}\Phi_{M}\right)...\left(\hat{\bm{T}}\Phi_{M}\right)]\bm{x} $. Since both $\Phi_{M}$ and $\hat{\bm{T}}$ are circulant matrices, $\hat{\bm{T}}\Phi_{M}=\Phi_{M}\hat{\bm{T}}$. Thus, $$ \hat{\bm{T}}\Phi\bm{x} = [\left(\Phi_{M}\hat{\bm{T}}\right)\ \left(\Phi_{M}\hat{\bm{T}}\right)...\left(\Phi_{M}\hat{\bm{T}}\right)]x = \Phi \bar{\bm{T}}\bm{x}, $$
where $\bar{\bm{T}} = \left[ \begin{array}{cccc}
\hat{\bm{T}} & 0 & \cdots & 0 \\
 0 & \hat{\bm{T}} & \cdots &0 \\
 \vdots & \vdots & \ddots & \vdots  \\
0 & 0 & ... & \hat{\bm{T}}
\end{array} \right] $.
\end{proof}

Theorem \ref{thm: circulant matrix exchange 2nd Ver} holds when $M$ divides $N$. Otherwise, we can pad zeros into the tail of $\bm{x}$ until $M$ divides $N$.

\subsection{Matching position finding}\label{ssec: step 3}
Let $\tilde{\bm{y}}$ be the convolved signal with $\tilde{\bm{y}}=\hat{\bm{T}}\bm{y}$.
By Theorem \ref{thm: circulant matrix exchange} or Theorem \ref{thm: circulant matrix exchange 2nd Ver}, $\tilde{\bm{y}}$ can be rewritten as:
\begin{equation}
\label{eq: downsample}
 (\tilde{\bm{y}})_{k}=\sum_{i=0}^{\lc \frac{N}{M} \rc -1 } (\bm{Tx})_{k+iM}.
\end{equation}
If $\displaystyle \tilde{m} = \arg\max_{k < M}(\tilde{\bm{y}})_{k}$, we are interested in the question that whether the ground truth $m$ belonging to the candidate position set $\mathbb{C}$, which is $\mathbb{C}=\left\{ k |  k = \tilde{m}+i*M \text{ for }  i=0,...,\lc \frac{N}{M} \rc -1  \right\}$.
If yes, the proposed algorithms correctly finds $\mathbb{C}$ to include $m$.
We examine the sufficient condition of successful trials in Theorem \ref{thm: sufficient condition for perfect match}.

\begin{theorem} \label{thm: sufficient condition for perfect match}
Let $\displaystyle (\bm{Tx})_{k}= \sum_{i=0}^{K-1} (\bm{t})_{i}(\bm{x})_{k+i} $.
Let the desired matching result be $\displaystyle m = \arg\max_{k}\ (\bm{Tx})_{k}$.
If $\frac{1}{ 2\lc \frac{N}{M} \rc +1 }(\bm{Tx})_{m} > \max_{k\neq m}(\bm{Tx})_{k}$, then the proposed algorithm determines $\tilde{m}$ such that $m \in  \mathbb{C}$.
\end{theorem}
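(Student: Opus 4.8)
The plan is to reduce the statement to a single comparison among the $M$ entries of the downsampled convolution $\tilde{\bm y}$. Write $L:=\lceil N/M\rceil$ and let $m_0 := m \bmod M$ be the residue class of the ground-truth peak, so that $m = m_0 + \hat{\imath}M$ for some $\hat{\imath}\in\{0,\dots,L-1\}$. By the definition of $\mathbb{C}$, membership $m\in\mathbb{C}$ holds precisely when $\tilde m = m_0$; hence it suffices to show that the index $m_0$ is the strict maximizer of $\tilde{\bm y}$, i.e. $(\tilde{\bm y})_{m_0} > (\tilde{\bm y})_{k}$ for every $k$ with $0\le k<M$ and $k\neq m_0$. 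This reformulation is what I would carry through the rest of the argument.

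Next I would isolate the single ``signal'' term inside $(\tilde{\bm y})_{m_0}$ and treat everything else as noise. Using the representation $(\tilde{\bm y})_{k}=\sum_{i=0}^{L-1}(\bm{Tx})_{k+iM}$ established in Eq. (\ref{eq: downsample}), the peak's class splits as
$$(\tilde{\bm y})_{m_0} = (\bm{Tx})_{m} + \eta, \qquad \eta := \sum_{i\neq \hat{\imath}} (\bm{Tx})_{m_0+iM},$$
a sum of $L-1$ correlation values, none of which is the global maximum. For any competing class $k\neq m_0$, the entry $(\tilde{\bm y})_{k}$ is a sum of $L$ correlation values, and since $k+iM\not\equiv m\pmod M$, none of these equals $(\bm{Tx})_m$ either. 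Writing $B:=\max_{k\neq m}(\bm{Tx})_k$, every such off-peak term is bounded above by $B$.

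I would then close the argument by a two-sided estimate: upper-bound each competitor by $(\tilde{\bm y})_{k}\le L\,B$, and lower-bound the peak class by $(\tilde{\bm y})_{m_0}\ge (\bm{Tx})_m-(L-1)B$ (using $|(\bm{Tx})_k|\le B$; when the correlations are nonnegative, as for $0/1$ binary signals, one even has $(\tilde{\bm y})_{m_0}\ge (\bm{Tx})_m$). Subtracting yields $(\tilde{\bm y})_{m_0}-(\tilde{\bm y})_{k}\ge (\bm{Tx})_m-(2L-1)B$, which is certainly positive under the stated hypothesis $\tfrac{1}{2L+1}(\bm{Tx})_m>\max_{k\neq m}(\bm{Tx})_k$; the slightly larger factor $2L+1$ leaves room for the extra terms introduced by the ceiling $\lceil N/M\rceil$ and the wrap-around/padding when $M\nmid N$. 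This forces $\tilde m=m_0$ and hence $m\in\mathbb{C}$.

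The main obstacle I anticipate is controlling the noise term $\eta$: to lower-bound $(\tilde{\bm y})_{m_0}$ I need a handle on how negative the $L-1$ off-peak correlations in the peak's own class can be. The one-sided quantity $\max_{k\neq m}(\bm{Tx})_k$ appearing in the hypothesis only bounds these terms from above, so making the estimate rigorous requires either a magnitude bound $|(\bm{Tx})_k|\le B$ or the nonnegativity of $\bm{Tx}$ (the binary-signal regime the paper targets). The careful bookkeeping of how many summands coincide with the peak, and how the ceiling contributes the extra couple of terms absorbed into the $2L+1$ factor, is the delicate part; the inequality chain itself is then routine.
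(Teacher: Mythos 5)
Your proof is correct and follows essentially the same route as the paper: reduce (WLOG) to showing the residue class of $m$ is the strict maximizer of $\tilde{\bm y}$, lower-bound that entry by $(\bm{Tx})_m - \left( \lceil N/M \rceil - 1 \right)\max_{k\neq m}(\bm{Tx})_k$, upper-bound every competing entry by $\lceil N/M \rceil \max_{k\neq m}(\bm{Tx})_k$, and compare, with the factor $2\lceil N/M\rceil+1$ in the hypothesis leaving the same slack over the needed $2\lceil N/M\rceil-1$. The obstacle you flag --- that the one-sided quantity $\max_{k\neq m}(\bm{Tx})_k$ cannot control how negative the off-peak terms in the peak's own class are --- is genuine, but it is equally present in the paper's own proof, which writes the identical lower bound without justification; both arguments implicitly need a magnitude bound $|(\bm{Tx})_k|\le \max_{j\neq m}(\bm{Tx})_j$ or nonnegativity of the correlations, so on this point you are, if anything, more careful than the original.
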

\begin{proof}
Without loss of generality, we assume $m=0$, implying $\tilde{m}=0$ such that $m \in \left\{ k |  i*M \text{ for }  i=0,...,\lc \frac{N}{M} \rc -1  \right\}$.
If $\displaystyle (\tilde{\bm{y}})_{0} > \max_{i \neq 0} (\tilde{\bm{y}})_{i} $, the proposed algorithm correctly picks $\tilde{m}=0$.

To show the sufficient condition of $\displaystyle (\tilde{\bm{y}})_{0} > \max_{i \neq 0} (\tilde{\bm{y}})_{i} $, we first derive the lower bound of $(\tilde{\bm{y}})_{0}$, which is $$ (\tilde{\bm{y}})_{0} = (\bm{Tx})_{0}+ \sum_{i=1}^{\lc \frac{N}{M} \rc -1}(\bm{Tx})_{iM} $$ $$ \geq (\bm{Tx})_{0} - \left( \lc \frac{N}{M} \rc -1 \right)\max_{k\neq 0}(\bm{Tx})_{k}.$$
Then, the upper bound of $\max_{i \neq 0} (\tilde{\bm{y}})_{i} $ is derived to be $  \max_{i \neq 0} (\tilde{\bm{y}})_{i} \leq \lc \frac{N}{M} \rc \max_{k\neq 0} (\bm{Tx})_{k}.$
Thus, we have
\begin{eqnarray*}
\begin{array}{l}
 (\bm{Tx})_{0} - \left( \lc \frac{N}{M} \rc -1 \right)\max_{k\neq 0} (\bm{Tx})_{k} >  \lc \frac{N}{M} \rc \max_{k\neq 0} (\bm{Tx})_{k} \\ \\
\Rightarrow \frac{1}{ 2\lc \frac{N}{M} \rc +1 }(\bm{Tx})_{0} > \max_{k\neq 0} (\bm{Tx})_{k}.
\end{array}
\end{eqnarray*}
We complete the proof.
\end{proof}

Although Theorem \ref{thm: sufficient condition for perfect match} provides the sufficient condition, the successful probability is still unknown. In fact, the probability is related to the signal type of $x$. We will give a more detailed analysis taking the practical application as an example later.

\subsection{Best matching position}\label{ssec: step 4}
The operation ``Matching position finding'' yields a set of candidate positions, but the unique position still is unknown.
The following discussion is based on the prerequisite that the set of candidate positions includes the correct solution $m$.

To identify the correct solution, it is equivalent to solving an unknown variable $m$ such that $m \equiv \tilde{m} \text{ mod } M$.
The problem is efficiently solved by Chinese Remainder Theorem as follows.

\begin{theorem} \label{thm: CRT}
(Chinese Remainder Theorem (CRT)) Any integer $m$ is uniquely specified mod $N$ by its remainders modulo $\alpha$ relatively prime integers $M_{1}$, $M_{2}$, ..., $M_{\alpha}$ as long as $\prod^{\alpha}_{i=1}M_{i} \geq N$.
\end{theorem}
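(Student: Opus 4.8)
The plan is to reduce the statement to the standard ring-isomorphism form of the Chinese Remainder Theorem and then observe that the hypothesis $\prod_i M_i \geq N$ upgrades uniqueness modulo the product to uniqueness modulo $N$. Write $P = \prod_{i=1}^{\alpha} M_i$ and consider the residue map $\psi(m) = (m \bmod M_1, \ldots, m \bmod M_\alpha)$ sending $\mathbb{Z}/P\mathbb{Z}$ into the product $\prod_{i=1}^{\alpha} \mathbb{Z}/M_i\mathbb{Z}$. I would first prove $\psi$ is injective, then conclude it is bijective by a cardinality count, and finally transfer the conclusion back to a window of length $N$.

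For injectivity, suppose two integers $m, m'$ leave the same remainder modulo every $M_i$. Then $M_i \mid (m - m')$ for each $i$. The key number-theoretic fact I would invoke is that pairwise relatively prime divisors of a common integer have a product that also divides it: if $M_1, \ldots, M_\alpha$ are pairwise coprime and each divides $d := m - m'$, then $P \mid d$. This follows by induction on $\alpha$ from the base case $\gcd(a,b) = 1,\ a \mid d,\ b \mid d \Rightarrow ab \mid d$, which itself comes from B\'ezout's identity. Hence $m \equiv m' \pmod{P}$, giving injectivity of $\psi$.

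Since the domain $\mathbb{Z}/P\mathbb{Z}$ and the codomain $\prod_{i=1}^{\alpha} \mathbb{Z}/M_i\mathbb{Z}$ both have exactly $P$ elements, injectivity forces $\psi$ to be a bijection; thus every tuple of residues is realized by a unique residue class modulo $P$. If one prefers an explicit proof of surjectivity, I would set $N_i = P/M_i$, note that $\gcd(M_i, N_i) = 1$ so there is an inverse $y_i$ with $N_i y_i \equiv 1 \pmod{M_i}$, and verify that $m \equiv \sum_{i=1}^{\alpha} r_i N_i y_i \pmod{P}$ reproduces the prescribed residues $r_i = m \bmod M_i$.

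Finally I would invoke the hypothesis $P \geq N$. The residues $(m \bmod M_i)_i$ determine $m$ uniquely modulo $P$; since $P \geq N$, at most one integer in any window of length $N$, in particular in $\{0, \ldots, N-1\}$, can share those residues, so $m$ is uniquely specified modulo $N$. I expect the only genuinely delicate step to be the passage from ``each $M_i$ divides $d$'' to ``$P$ divides $d$,'' which is exactly where \emph{pairwise} coprimality (rather than merely joint coprimality) is essential; the remaining steps are a cardinality count and elementary bookkeeping.
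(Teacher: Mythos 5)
Your proof is correct, but there is nothing in the paper to compare it against: the paper states this result as the classical Chinese Remainder Theorem and uses it as a black box (to recover the unique matching position $\hat{m}$ from the two residues $\tilde{m}_1 \bmod M_1$ and $\tilde{m}_2 \bmod M_2$), offering no proof at all. Your argument is the standard textbook one, and it is sound: injectivity of the residue map $\psi$ via the lemma that pairwise coprime divisors of $m - m'$ force their product $P$ to divide $m - m'$, bijectivity by a cardinality count (with the explicit $\sum_i r_i N_i y_i$ construction as an optional constructive alternative), and finally the observation that $P \geq N$ means two integers in $\{0,\ldots,N-1\}$ with identical residues differ by a multiple of $P$ exceeding their possible gap, hence coincide. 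You are also right that \emph{pairwise} coprimality is the load-bearing hypothesis rather than joint coprimality: for instance $M_1 = 6$, $M_2 = 10$, $M_3 = 15$ have overall $\gcd$ equal to $1$, yet residues modulo them only determine an integer modulo $\mathrm{lcm} = 30$, not modulo the product $900$, so $\psi$ fails to be injective. The only step worth spelling out in a full write-up is the induction's reliance on $\gcd(M_1 \cdots M_k, M_{k+1}) = 1$, which itself follows from pairwise coprimality via B\'ezout or unique factorization; your sketch covers this adequately.
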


More specifically, we consider two co-prime integers in this paper.
Let $M_{1}$ and $M_{2}$ be relatively prime integers such that $M_{1}M_{2}\geq N$.
Hence, the two equations, $m \equiv \tilde{m}_{1} \text{ mod } M_{1}$ and $m \equiv \tilde{m}_{2} \text{ mod } M_{2}$, where $\tilde{m}_{1}$ and $\tilde{m}_{2}$ are the results obtained from the third operation ``Matching position finding,'' have the unique solution $m$ that is the best matching position.

\subsection{Practical Applications}\label{ssec: prac app}
In this section, we discuss the practical applications using the proposed algorithm.
Among them, synchronization is a critical issue in communications.
For example, global positioning system (GPS) consumes 30\%-75\% power for synchronization.
Thus, it is crucial to develop cost-effective GPS synchronization.
The problem is defined as follows.
\begin{enumerate}
\item[$\bullet$] In the sender, a spreading code $\bm{x} \in \left\{ 1, -1 \right\}^{N}$, which is also known by the receiver, is sent to the receiver.
\item[$\bullet$] The receiver obtains the delayed code: $$\bm{x}^{m}=\left[ (\bm{x})_{m},...,(\bm{x})_{N-1},(\bm{x})_{0},...,(\bm{x})_{m-1} \right].$$
 \item[$\bullet$] By comparing the spreading code ($\bm{t}=\bm{x}$ in this paper) and the delayed code, the receiver solves $\displaystyle \hat{m}=\arg\max_{0<k < N}\ Sim\left(\bm{x}^{k},\bm{t}\right)$.
\end{enumerate}

It should be noted that whatever the size of $\bm{t}$ is, the computation complexity of fast convolution by FFT is invariant.
In other words, even though the size of template is smaller than $N$, it cannot reduce the computation cost.

Nevertheless, the proposed method is not necessary to set $\bm{t}=\bm{x}$ for synchronization and can benefit from low-dimensional template.
In fact, it suffices to set $\bm{t} = \left[(\bm{x})_{0},...,(\bm{x})_{K-1} \right]$ with the size $K\leq M<N$.
We prove in the following theorem the successful probability and computation complexity of our method.

\begin{theorem} \label{thm: binary signal reconstruction}
Let $\alpha = \min(M_{1},M_{2})$ and let $\beta= \max(M_{1},M_{2})$.
If $\bm{x} \in \left\{ 1, -1 \right\}^{N}$, where $1$ and $-1$ have equal probability, and $ \bm{t} = [(x)_{m},...,(x)_{m+K-1}]$ for any $m$, then our proposed algorithm perfectly identifies the matching result $\hat{m}$  to be equal to the correct position $m$, {\em i.e.}, $\hat{m}=m$, with the successful probability being larger than $1-2\left( O(\alpha e^{-\frac{K}{8\lc \frac{N}{\alpha} \rc }})+ O( \frac{1}{K}\lc \frac{N}{\alpha} \rc)) \right)$ and computation costs of $O(N+\beta \log \beta)$ for additions and $O(\beta \log \beta)$ for multiplications.
\end{theorem}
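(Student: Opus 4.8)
The plan is to split the claim into a success-probability part and a cost part, and to reduce the former to verifying the sufficient condition of Theorem \ref{thm: sufficient condition for perfect match} with high probability for each downsampling factor. The starting observation is that the autocorrelation peak is exact: since $\bm{t}=[(x)_m,\dots,(x)_{m+K-1}]$ and $\bm{x}\in\{\pm1\}^N$, we have $(\bm{Tx})_m=\sum_{i=0}^{K-1}(x)_{m+i}^2=K$, the largest value any cross-correlation of $\pm1$ sequences can attain. The event to control is therefore that, in the filtered downsampled vector $\tilde{\bm y}=\hat{\bm T}\bm y$ of Eq.~(\ref{eq: downsample}), the bin $\tilde m$ containing $m$ beats every competing bin, for each factor $M\in\{M_1,M_2\}$. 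Once both residues are recovered correctly, Theorem \ref{thm: CRT} returns $\hat m=m$ deterministically, so the best-matching-position step introduces no further randomness and the whole failure probability reduces to bin-domination failure.

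I would analyze the bins directly. Writing $L=\lceil N/M\rceil$, the peak bin satisfies $\tilde{\bm y}_{\tilde m}=K+A$, while every other bin is $\tilde{\bm y}_j=\sum_{i=0}^{K-1}(x)_{m+i}\,(\bm y)_{j+i}$, which expands into $KL$ products of $\pm1$ signs with mean zero. Because $K\le M$, each template sign $(x)_{m+i}$ can coincide with an inner summand for at most one index, so up to a sparse set of coincidences the bin is a sum of $\Theta(KL)$ independent Rademacher terms. Taking threshold $K/2$, Hoeffding gives $\Pr[\tilde{\bm y}_j\ge K/2]\le e^{-K/(8L)}$ and $\Pr[A\le -K/2]\le e^{-K/(8L)}$, and a union bound over the $M$ bins yields failure probability $O(\alpha e^{-K/(8\lceil N/\alpha\rceil)})$ in the worst case $M=\alpha$ (largest $L$). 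The residual term $O(\tfrac1K\lceil N/\alpha\rceil)$ comes from the probability that the coincidences are too dense for the clean Hoeffding estimate, which I would bound by a second-moment/Markov argument counting the $O(L)$ collided indices per bin. Summing the two failure modes over the two co-prime factors produces the stated factor $2$ and the bound $1-2\big(O(\alpha e^{-K/(8\lceil N/\alpha\rceil)})+O(\tfrac1K\lceil N/\alpha\rceil)\big)$.

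For the cost, the downsampling in Step~3 only accumulates each of the $N$ entries of $\bm x$ into its bin, costing $O(N)$ additions and no multiplications, for both factors. All remaining work lives in the reduced space of size $M_i+K=O(\beta)$: the two FFT/IFFT pairs cost $O(\beta\log\beta)$ additions and multiplications, the pointwise products cost $O(\beta)$ multiplications, and the argmax together with the CRT intersection is $O(\beta)$. Collecting these gives $O(N+\beta\log\beta)$ additions and $O(\beta\log\beta)$ multiplications, exactly as claimed.

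The main obstacle is the middle step: the off-peak bins are a Rademacher chaos, bilinear in the signs of the \emph{same} random sequence that defines $\bm{t}$, rather than a sum of independent variables, so standard Hoeffding does not apply verbatim. The honest work is to quantify the sparse dependence created by the at most $K$ index coincidences and to show the chaos splits into a dominant linear part (governed by Hoeffding, giving the $e^{-K/(8L)}$ rate) plus a rare quadratic remainder (governed by a second-moment bound, giving the $\tfrac1K\lceil N/\alpha\rceil$ term). Verifying that this decomposition is tight enough to reproduce both error terms, with the precise $1/(8L)$ constant, is where the care is needed.
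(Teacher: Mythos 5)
Your overall skeleton matches the paper's: split success into the two residue problems (the CRT step then being deterministic), threshold the bins at $K/2$, bound the false-bin event by a Chernoff/Hoeffding estimate plus a union bound over at most $\alpha$ bins, and your cost accounting (Step 3 gives $O(N)$ additions, the low-dimensional FFTs give $O(\beta\log\beta)$, everything else negligible) is exactly the paper's and is correct.

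The gap is in where the polynomial term $O\!\left(\frac{1}{K}\lc \frac{N}{\alpha} \rc\right)$ comes from. In the paper it has a concrete, elementary origin: the bin containing $m$ has mean $K$ and variance $\left(\lc \frac{N}{\alpha}\rc-1\right)K$ (the required pairwise uncorrelatedness of the $(\bm{Tx})_{l}$ is verified by a covariance computation), and Chebyshev's inequality gives $\text{Pr}\left[\text{true bin} \leq \frac{K}{2}\right] \leq \frac{4(\lc N/\alpha \rc -1)}{K}$, which is precisely the second error term. You instead bound the true-bin deficit by Hoeffding, $\text{Pr}[A \leq -K/2] \leq e^{-K/(8L)}$ --- which runs into exactly the order-two Rademacher-chaos dependence you yourself flag as the main obstacle, whereas Chebyshev sidesteps it because it needs only pairwise uncorrelatedness --- and you are then forced to invent a different source for the $\frac{1}{K}\lc \frac{N}{\alpha}\rc$ term, attributing it to ``coincidences too dense for the clean Hoeffding estimate,'' an argument you never carry out and which does not correspond to any computation that naturally produces that quantity. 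In short, the second error term is reverse-engineered rather than derived; replacing your Hoeffding bound on the peak bin by the paper's Chebyshev bound both closes that hole and eliminates the need for the collision-density machinery. (Your dependence concern does apply to the off-peak Chernoff step, but note the paper shares that hand-wave, simply treating those bins as sums of independent $\pm 1$ variables.)
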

\begin{proof}
The proposed algorithm can be roughly divided into two matching problems based on $M_{1}$ and $M_{2}$.
Only when both matching problems succeed, the proposed method is considered to be successful.
In the following, we first take the matching problem based on $M_{1}$ as an example.
Definitely, the analysis is also applied to another matching problem involving $M_2$.

The probability that the matching problem employing $M_{1}$ reports an incorrect result is equal to $$ P_{F}^{M_{1}} = \text{Pr}\left[  (\hat{\bm{T}}\bm{y})_{m} \leq \max_{i \neq m} (\hat{\bm{T}}\bm{y})_{i}  \right].$$
Since $(\bm{Tx})_{k}= \sum_{i=0}^{K-1} (\bm{t})_{i}(\bm{x})_{k+i} $, we have
\begin{enumerate}
\item[$\bullet$] If $k=m$, $(\bm{Tx})_{m}=\sum_{i=0}^{K-1} (\bm{x})_{m+i}^{2} = K $. Furthermore, $E\left[ (\bm{Tx})_{m} \right] = K$ and $Var\left[ (\bm{Tx})_{m} \right] = 0$.
\item[$\bullet$] If $k \neq m$, $E\left[(\bm{Tx})_{k} \right] = \sum_{i=0}^{K-1} E\left[ (\bm{t})_{i}(\bm{x})_{k+i}\right] = 0 $ due to the independence between $(\bm{t})_{i}$ and $(\bm{x})_{k+i}$. Furthermore, both $(\bm{t})_{i}(\bm{x})_{k+i}$ and $(\bm{t})_{j}(\bm{x})_{k+j}$ are independent for $i \neq j$. Thus,  $Var\left[(\bm{Tx})_{k} \right] = \sum_{i=0}^{K-1} Var\left[ (\bm{t})_{i}(\bm{x})_{k+i}\right] = K $.
\end{enumerate}

Recall that $\displaystyle (\hat{\bm{T}}\bm{y})_{k} = \sum_{i=0}^{\lc \frac{N}{M_{1}} \rc -1}(\bm{Tx})_{k+iM_{1}}$.\\ Let $S_{k} = \left\{ l | l = k+i*M_{1} \text{ for }  i=0,...,\lc \frac{N}{M_{1}} \rc -1   \right\}$. Thus,
\begin{enumerate}
\item[$\bullet$] If $m \in S_{k}$, $\displaystyle E\left[ (\hat{\bm{T}}\bm{y})_{k} \right] = \sum_{i=0}^{\lc \frac{N}{M_{1}} \rc -1}E\left[ (\bm{Tx})_{k+iM_{1}} \right] = K$. Furthermore, $E\left[ (\bm{Tx})_{k+iM_{1}} (\bm{Tx})_{k+jM_{1}}\right]=0$ for $i \neq j$. It implies $Cov\left[  (\bm{Tx})_{k+iM_{1}}, (\bm{Tx})_{k+jM_{1}}\right]=0$ for $i \neq j$. Thus, $\displaystyle Var\left[ (\hat{\bm{T}}\bm{y})_{k} \right] = \sum_{i=0}^{\lc \frac{N}{M_{1}} \rc -1}Var\left[ (\bm{Tx})_{k+iM_{1}} \right] + \sum_{i \neq j}^{\lc \frac{N}{M_{1}} \rc -1}Cov\left[ (\bm{Tx})_{k+iM_{1}},(\bm{Tx})_{k+jM_{1}} \right]=(\lc \frac{N}{M_{1}} \rc -1)K$.
\item[$\bullet$] If $m \not\in S_{k}$, $ E\left[ (\hat{T}y)_{k} \right] = 0 $ and $ Var\left[ (\hat{\bm{T}}\bm{y})_{k} \right]= \lc \frac{N}{M_{1}} \rc K$.
\end{enumerate}

Then, we will bound the probability $P_{F}^{M_{1}}$ by the following events: $E_{1}: \exists k \text{ and }  m \not\in S_{k}$, $\ s.t \ (\hat{\bm{T}}\bm{y})_{k} \geq \frac{K}{2}$; $E_{2}: \exists k \text{ and }  m \in S_{k} $, $\ s.t \ (\hat{\bm{T}}\bm{y})_{k} \leq \frac{K}{2}$. If none of the events holds, then the algorithm output is correct. In other words, $P_{F}^{M_{1}} = \text{Pr}\left[ E_{1} \right]+\text{Pr}\left[ E_{2} \right]$.

First, we discuss the probability, $\text{Pr}\left[ E_{1} \right]$. We start from the probability that for each $k$ such that $m \not\in S_{k}$, $(\hat{\bm{T}}\bm{y})_{k} \geq \frac{K}{2}$. It should be noted that, in this case, $(\hat{\bm{T}}\bm{y})_{k}$ is considered as a sum of independent random variables taking values in $\{ 1, -1 \}$ with the probability $\frac{1}{2}$. Thus, for each $k$, $\text{Pr}\left[ (\hat{\bm{T}}\bm{y})_{k} \geq \frac{K}{2} \right]$ is derived by using Chernoff bound as follows:
$$
\text{Pr}\left[ (\hat{\bm{T}}\bm{y})_{k} \geq \frac{K}{2} \right] \leq e^{ -\frac{ (\frac{K}{2})^{2}  }{2 \lc \frac{N}{M_{1}} \rc}K } =  e^{- \frac{ K  }{8 \lc \frac{N}{M_{1}} \rc} }.
$$
Since the cardinality of $\left\{k | m \not\in S_{k} \right\} $ is at most $M_{1}-1$, then
$$
\text{Pr}\left[ E_{1} \right]\leq (M_{1}-1)e^{- \frac{ K  }{8 \lc \frac{N}{M_{1}} \rc} } < M_{1}e^{- \frac{ K  }{8 \lc \frac{N}{M_{1}} \rc} }.
$$

Second, $\text{Pr}\left[ E_{2} \right]$ is bounded by using Chebyshev's inequality:
$$
\begin{array}{l}
\text{Pr}\left[ E_{2} \right] = \text{Pr}\left[ (\hat{\bm{T}}\bm{y})_{k} \leq \frac{K}{2} \right] =  \text{Pr}\left[ (\hat{\bm{T}}\bm{y})_{k} \leq K-\frac{K}{2} \right] \\
= \text{Pr}\left[ (\hat{\bm{T}}\bm{y})_{k} - K \leq -\frac{K}{2} \right] = \text{Pr}\left[  K - (\hat{\bm{T}}\bm{y})_{k} > \frac{K}{2} \right] \\
\leq \frac{Var\left[ (\hat{\bm{T}}\bm{y})_{k} \right]}{(\frac{K}{2})^{2}}=\frac{4(\lc \frac{N}{M_{1}} \rc -1)}{K}.
\end{array}
$$
Consequently, $P_{F}^{M_{1}} \leq \text{Pr}\left[ E_{1} \right] + \text{Pr}\left[ E_{2} \right] \leq  M_{1}e^{- \frac{ K  }{8 \lc \frac{N}{M_{1}} \rc} } + \frac{4(\lc \frac{N}{M_{1}} \rc -1)}{K}$.

Similarity, the above derivations apply to the second matching problem based on $M_{2}$.
It results in $P_{F}^{M_{2}} \leq M_{2}e^{- \frac{ K  }{8 \lc \frac{N}{M_{2}} \rc} } +  \frac{4(\lc \frac{N}{M_{2}} \rc -1)}{K}$.

In sum, the proposed algorithm succeeds with the probability being larger than $(1-P_{F}^{M_{1}})(1-P_{F}^{M_{2}})$. If $M_{1} > M_{2}$, then $P_{F}^{M_{1}} < P_{F}^{M_{2}}$.
For simplification, let $\alpha = \min(M_{1},M_{2})$ and we have $(1-P_{F}^{M_{1}})(1-P_{F}^{M_{2}}) \geq  (1-P_{F}^{\alpha})^{2} \geq  1-2\left( O(\alpha e^{-\frac{K}{8\lc \frac{N}{\alpha} \rc }})+ O( \frac{1}{K}\lc \frac{N}{\alpha} \rc)) \right)$.
We complete the proof regarding the successful probability.

To prove the computation complexity, we check Algorithm \ref{alg:FTM} in a step-by-step manner.
Step 2 depends on how to pick co-prime integers. In our case, co-prime integers are assigned by setting $M_{1}=K$ and $M_{2}=K+1$, and, thus, the cost is negligible.
Step 3 obviously costs $O(N)$ additions.
Step 4 assigns vectors with the sizes being smaller than $M_{1}$ or $M_{2}$ and the cost is also negligible. Steps 5 and 6 perform FFT and cost $O(M_{1} \log M_{1})+O(M_{2} \log M_{2})$.
Step 7 searches the maximum value within the vector with size $M_{1}$ or $M_{2}$, and the cost is negligible. Furthermore, in Steps 8 and 9, the cardinality of $S_{M_{1}}$ and $S_{M_{2}}$ are $\lc \frac{N}{M_{1}} \rc$ and $\lc \frac{N}{M_{2}} \rc$ respectively. Both of them are smaller than $O(N)$

Consequently, the computation cost is bounded by Step 3, and Steps 5 and 6 with the total cost being $O(N+M_{1} \log M_{1} + M_{2} \log M_{2})$ for additions and $O(M_{1} \log M_{1} + M_{2} \log M_{2})$ for multiplications.
For simplification, if we let $\beta= \max(M_{1},M_{2})$, then the cost is bounded by $O(N+\beta \log \beta)$ for additions and $O(\beta \log \beta)$ for multiplications.
We complete the proof.
\end{proof}


\section{Simulation Results}\label{sec:exp}

The simulations were conducted in an Matlab R2012b environment with an Intel CPU Q6600 and $16$ GB RAM under Microsoft Win7 ($64$ bits).
We compare the proposed algorithm implemented in C with fast convolution by FFT \cite{Lewis1995} and sparse FFT \cite{Hassanieh2012acm}.
Another goal is also to verify our theoretical analyses.

The testing procedure is:
\begin{enumerate}
\item Generate $\bm{x} \in \left\{ 1, -1 \right\}^{N}$ and template is extracted from $\bm{x}$ with $\bm{t}=\left[ (\bm{x})_{m},...,(\bm{x})_{m+K-1} \right]$.
\item Input $\bm{x}$ and $\bm{t}$ into Algorithm \ref{alg:FTM} and output $\hat{m}$, where $M_{1}=K$ and $M_{2}=K+1$.
\item If $m=\hat{m}$, the trial is successful. Otherwise, it is failed.
\end{enumerate}
We repeats this procedure $1000$ times and calculates the successful probability.

In Fig. \ref{fig:Computation Cost}, we verify the computation cost derived in Theorem \ref{thm: binary signal reconstruction}.
Fast convolution by FFT is considered as the baseline.
It should be noted that all the results shown in Fig. \ref{fig:Computation Cost} have successful probabilities fixed at $100\%$.
In Fig. \ref{fig:Computation Cost}(a), the results for our method are shown by fixing $N$ at $2^{26}$ with $K$ being increased from $2^{16}$ to $2^{25}$.
However, it should be noted that the complexities of \cite{Lewis1995} and \cite{Hassanieh2012acm} only depend on $N$.
Thus, their computation costs are invariant to $K$.
On the other hand, the proposed algorithm benefits from the smaller size of template.
One can observe that the proposed method outperforms  \cite{Lewis1995} and  \cite{Hassanieh2012acm} when $\frac{N}{K}\geq 2^{2}$ and $\frac{N}{K}\geq 2^{4}$, respectively, implying that our algorithm is efficient when the template size is (far) smaller than the corresponding signal size.
Furthermore, when $K \leq 2^{21}$, the computation cost is, in fact, dominated by $O(N)$ instead of $O(M \log M)$ and increased slowly.
On the contrary, when $O(M \log M)$ dominates the computation cost, which will be about twice larger when $K$ is doubled.

In Fig. \ref{fig:Computation Cost}(b), the results for our method are shown by fixing $K$ with $N$ being increased from $2^{16}$ to $2^{26}$.
In addition, the results for so-called ``Proposed Method (best)'' are shown, where $N$ still is increased from $2^{16}$ to $2^{26}$ but, for each $N$, $K$ is assigned as small as possible such that successful probabilities achieves $100\%$ based on Theorem \ref{thm: binary signal reconstruction}.
Overall, these results indicate that
(1) Our method outperforms \cite{Lewis1995} and \cite{Hassanieh2012acm} especially when $N$ is large enough, implying that, for a large-scale problem, it will be more efficient.
(2) When $N \leq 2^{21}$, small sizes of templates (for dot-curve) suffice to quickly achieve $100\%$ successful template matching.
In a similar way to Fig. \ref{fig:Computation Cost}(a), when $N \geq 2^{21}$, $O(N)$ overwhelms $O(M\log M)$.
In this case, the slopes of red solid and dot curves (our methods) approximate to those of solid black (FFTW) and solid blue (sFFT) curves, where both complexities are $O(N)$ and $O(N \log N)$, respectively.

\begin{figure}[h]
\begin{minipage}[b]{.99\linewidth}
  \centering{\epsfig{figure=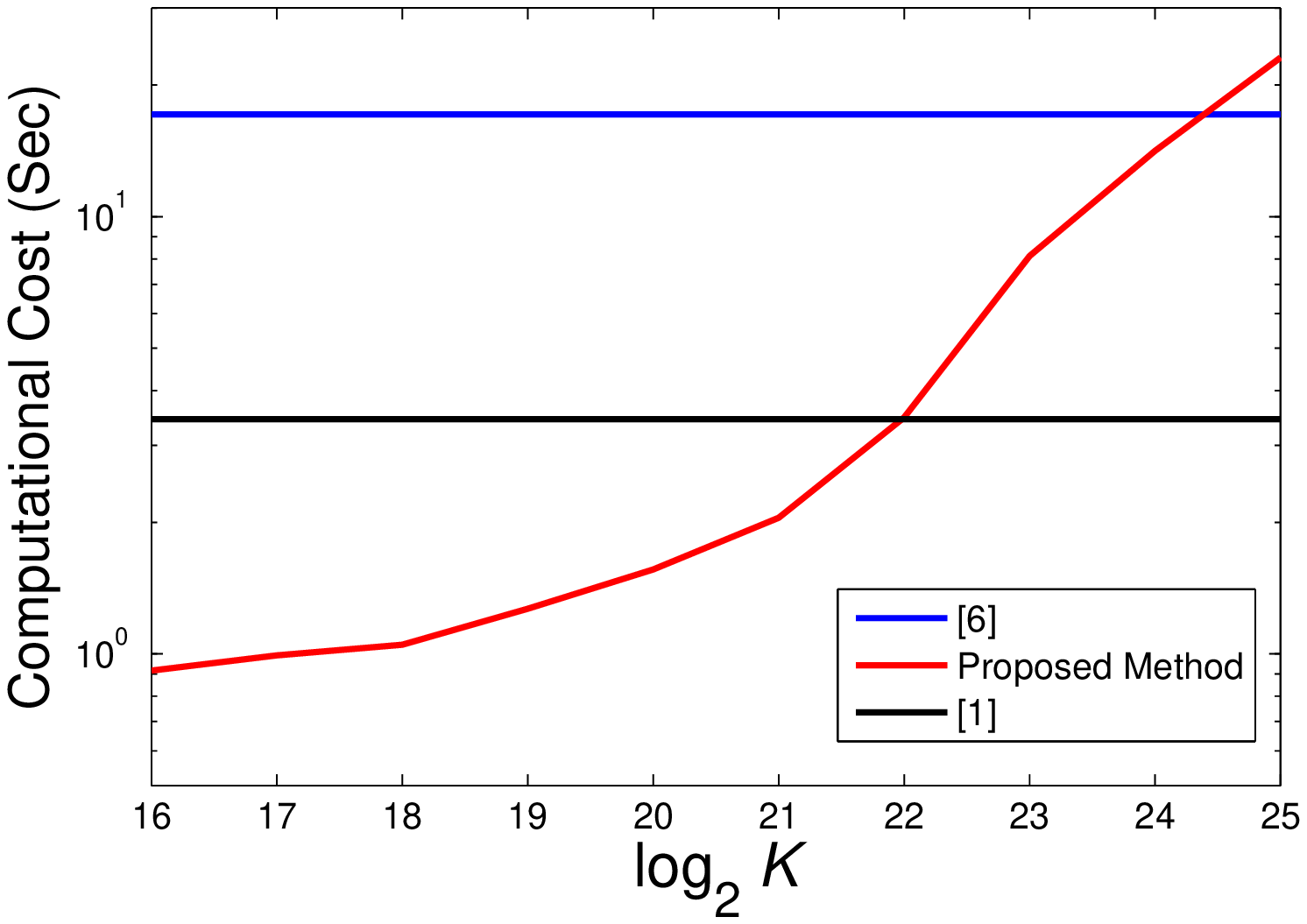,width=2.5in}}
  \centerline{\quad (a)}
\end{minipage}
\begin{minipage}[b]{.99\linewidth}
  \centering{\epsfig{figure=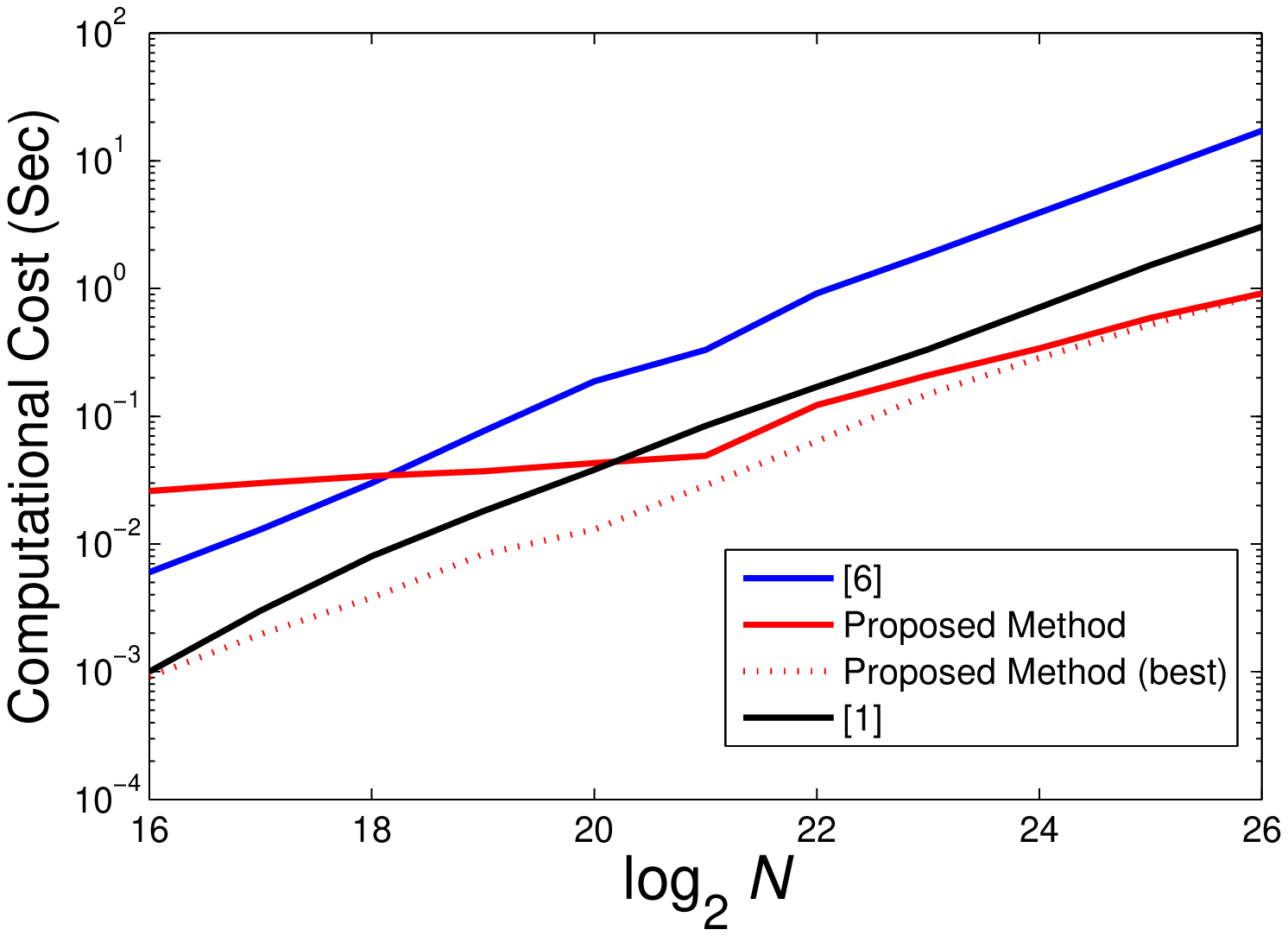,width=2.5in}}
  \centerline{\quad (b)}
\end{minipage}
\caption{(a) The computation cost versus different $K$'s under $N=2^{26}$. (b) The computation cost versus different $N$'s under $K=2^{16}$. All comparisons were conducted with successful probabilities being fixed at $100\%$}
\label{fig:Computation Cost}
\end{figure}

Table \ref{table: succ prob fixed ratio} further verifies the theoretical probability derived in Theorem \ref{thm: binary signal reconstruction}.
By Theorem \ref{thm: binary signal reconstruction}, the proposed algorithm succeeds with the probability being larger than $1-2\left( O(\alpha e^{-\frac{K}{8\lc \frac{N}{\alpha} \rc }})+ O( \frac{1}{K}\lc \frac{N}{\alpha} \rc)) \right)$. In our case, $\alpha = \min(M_{1},M_{2})$, where $M_{1}$ and $M_{2}$, in fact, were set to $K$ and $K+1$, respectively.
Thus, when the condition $K^{2}\geq O(N)$ holds, $\frac{1}{K}\lc \frac{N}{\alpha} \rc = \frac{1}{K}\lc \frac{N}{K} \rc \rightarrow 0$.
Similarly, we have $\alpha e^{-\frac{K}{8\lc \frac{N}{\alpha} \rc }} \rightarrow 0 $ with $\frac{K^{2}}{\log K} \geq O(N) $.
In this experiment, $\frac{N}{K}$ was fixed as $2^{10}$.
The results show that along with the increase of $N$, the condition $\frac{K^{2}}{\log K}=\frac{N^{2}}{2^{20}\log N - c} \geq O(N)$ ($c$ is a positive constant), which is equivalent to $\frac{N}{\log N}\geq O(2^{20})$, holds.
Under the condition, Algorithm \ref{alg:FTM} succeeds with high probability, as depicted in Table  \ref{table: succ prob fixed ratio}.

\begin{table}[!htbp]
\caption{Successful probabilities under $K=\frac{N}{2^{10}}$.}
\centering
\begin{tabular}{|c|c|c|c|c|c|}
\hline
$N$  & $2^{23}$ & $2^{24}$ & $2^{25}$ & $2^{26}$ & $2^{27}$\\
\hline
Successful Prob. & 0.05 & 0.21 & 0.85 & 1 & 1\\
\hline
\end{tabular}
\label{table: succ prob fixed ratio}
\end{table}

Finally, we test whether the proposed method is robust to noisy inference.
In this case, let $\bm{x}_{e}=x+\bm{e}$, where $\bm{e} \in \mathbb{R}^{N}$ is additive Gaussian random noise.
Both $x_{e}$ and $t$ were fed into Algorithm \ref{alg:FTM}.
The parameters, $N=2^{26}$ and $M=2^{16}$, were chosen because the corresponding successful probability lives on edge between $100\%$ and $<100\%$, which is expected to be interfered by noise obviously.
One can observe from Table \ref{table: robust under noise} that our method works well when SNRs are larger than $1$ dB.

\begin{table}[!htbp]
\caption{Successful probabilities under $N=2^{26}$, $K=2^{16}$, and different SNRs.}
\centering
\begin{tabular}{|c|c|c|c|c|c|c|}
\hline
SNR (dB) & $20$ & $10$ & $6$ & $1$ & $-2$ &$-6$\\
\hline
Successful Prob. & 1 & 1 & 0.95 & 0.72 & 0.48 & 0.06\\
\hline
\end{tabular}
\label{table: robust under noise}
\end{table}

\section{Conclusions and Future Work}\label{sec:conclu}
We present a fast and cost-effective template matching scheme in this paper.
We exploit the commutative property of partial circulant matrix to design the sensing matrix for template matching.
Our theoretical analyses and simulation results show that the proposed method outperforms FFT- and sparse FFT-based methods.
The future work will be examining fast template matching with similarity measures other than cross-correlation.

\section{Acknowledgment}
This work was supported by Ministry of Science and Technology, Taiwan (ROC), under grants MOST 104-2221-E-001-019-MY3 and 104-2221-E-001-030-MY3.

\bibliographystyle{IEEEbib}	
\bibliography{refs}		

\end{document}